\DeclareSymbolFont{matha}{OML}{txmi}{m}{it}
\DeclareMathSymbol{\varv}{\mathord}{matha}{118}
\newcommand{\floor}[1]{\lfloor #1 \rfloor}
\begin{document}
	\title{Achievable Rate Optimization for Large Stacked Intelligent Metasurfaces Based on Statistical CSI} 
	\author{Anastasios Papazafeiropoulos, Pandelis Kourtessis,  Symeon Chatzinotas, Dimitra I. Kaklamani, 			Iakovos S. Venieris \thanks{A. Papazafeiropoulos is with the Communications and Intelligent Systems Research Group, University of Hertfordshire, Hatfield AL10 9AB, U. K., and with SnT at the University of Luxembourg, Luxembourg.  P. Kourtessis is with the Communications and Intelligent Systems Research Group, University of Hertfordshire, Hatfield AL10 9AB, U. K.  S. Chatzinotas is with the SnT at the University of Luxembourg, Luxembourg. Dimitra I. Kaklamani is with the Microwave and Fiber Optics Laboratory, and Iakovos S. Venieris is  with the Intelligent Communications and Broadband Networks Laboratory, School of Electrical and Computer Engineering, National Technical University of Athens, Zografou, 15780 Athens,	Greece.	
			Corresponding author's email: tapapazaf@gmail.com.}}
	\maketitle\vspace{-1.7cm}
	\begin{abstract}	
		Stacked intelligent metasurface (SIM) is an emerging design that consists of  multiple layers of metasurfaces. A SIM enables holographic
		multiple-input multiple-output (HMIMO) precoding  in the wave domain, which results in the reduction of energy consumption and hardware cost. On the ground of multiuser beamforming,  this letter  focuses on  the downlink achievable rate and its maximization. Contrary to previous works on multiuser SIM, we consider statistical channel state information (CSI) as opposed to instantaneous CSI to overcome  challenges such as large overhead. Also, we examine the performance of large surfaces. We apply an alternating optimization (AO) algorithm regarding the phases of the SIM and the allocated transmit power. Simulations  illustrate the performance of the considered large SIM-assisted design as well as the comparison between different CSI considerations.
	\end{abstract}
	\begin{keywords}
		Reconfigurable intelligent surface 	(RIS), stacked intelligent metasurfaces (SIM),     6G networks. 
	\end{keywords}
	
	\section{Introduction}
	
	The technology of reconfigurable intelligent surfaces (RISs)	has recently emerged to increase coverage and enhance spectral and energy efficiencies in various communication environments \cite{DiRenzo2020,Liu2021}. In general terms, an RIS includes a surface that includes a large number of elements, which  are nearly passive and have low cost. The purpose of these elements is to adjust the phases of the incident electromagnetic (EM) waves by using a smart controller, and hence, shape the propagation environment dynamically  \cite{Huang2019,Papazafeiropoulos2021,Papazafeiropoulos2023}.
	
	However, most existing works on RIS assume single-layer metasurface structure \cite{Huang2019,Papazafeiropoulos2021,Guo2020a}, which imposes a constraint on the adjustment of the beam patterns. Also, the single-layer structures of RISs do not have  the capability of   inter-user interference suppression as shown in \cite{Guo2020a}. These observations led the authors in \cite{An2023d,An2023c} to propose a stacked intelligent metasurface (SIM), which consists of an array of programmable metasurfaces similar to artificial neural networks (ANNs). Among the processing capabilities of a SIM, we   that the \textcolor{black}{forward}  propagation takes place at the speed of light. 
	
	On this ground, in \cite{An2023d}, authors proposed a SIM-based  design for the transceiver of point-to-point multiple-input multiple-output (MIMO) communication systems, where the combining and the precoding  take place as the EM waves propagate along the SIM. In \cite{An2023c},  we observe the integration of a SIM to the transmitter, i.e., the  base station (BS) towards enabling  beamforming  in the EM domain based on instantaneous channel state information (CSI). Contrary to \cite{An2023d} and \cite{An2023c}, in \cite{Papazafeiropoulos2024a} and \cite{Papazafeiropoulos2024}, we proposed more general hybrid digital wave designs, where all element parameters are optimised simultaneously through more efficient algorithms. 
	
	In this work, we focus on a SIM-enabled multiuser architecture operating solely in the wave domain. Note that \cite{Papazafeiropoulos2024} assumes a hybrid digital wave design, \textcolor{black}{and \cite{Lin2024} focuses on satellite communication systems}. Also, contrary to previous works \cite{An2023d,An2023c,Papazafeiropoulos2024a}, we consider a SIM that consists of large metasurfaces, since we apply the  use-and-then-forget (UatF) bound \cite{Bjoernson2017}. Most importantly, we obtain the downlink rate and perform its optimization regarding the phase shifts and transmit power in terms of statistical CSI. Notably, this approach enables the optimization at every several coherence intervals  rather than optimizing at each interval. Hence, we achieve lower overhead,  which is one of the main challenges in SIM-assisted systems.
	
	\textcolor{black}{	\textit{Notation}: Matrices  and  vectors are represented by boldface upper  and lower case symbols, respectively. The notations $(\cdot)^\T$, $(\cdot)^\H$, and $\tr\!\left( {\cdot} \right)$ denote the transpose, Hermitian transpose, and trace operators, respectively. Also, the symbol  $\EE\left[\cdot\right]$ denotes  the expectation operator. The floor function $\floor{x}$  gives as output the greatest integer less than or equal to $x$. The notation  $\diag\left(\bA\right) $ represents a vector with elements equal to the  diagonal elements of $ \bA $. The notation  $\bb \sim \cC\cN{(\b0,\mathbf{\Sigma})}$ represents a circularly symmetric complex Gaussian vector with zero mean and a  covariance matrix $\mathbf{\Sigma}$.}
	
	\section{System Model}\label{System}
	We consider a SIM-aided MIMO communication system as depicted in Fig. \ref{Fig01}. In particular, a BS, which includes  $ N_{t} $ antennas, communicates with $ K $ single-antenna user equipments (UEs) through a  SIM  performing wave-based processing. The SIM is implemented by  $ L $ metasurfaces, where each one has a large number of $ N $ meta-atoms.   Let  $\mathcal{K}=\{1,\ldots,K\} $, $ \mathcal{L}=\{1,\ldots,L\} $, and $ \mathcal{N}=\{1,\ldots,N\} $ denote the sets of UEs, metasurfaces, and  meta-atoms,  respectively. Note that an  intelligent controller adjusts the shifts of the phases of the electromagnetic (EM) waves that impinge on  the metasurface layers. 

	\begin{figure}
		\begin{center}
			\includegraphics[width=0.8\linewidth]{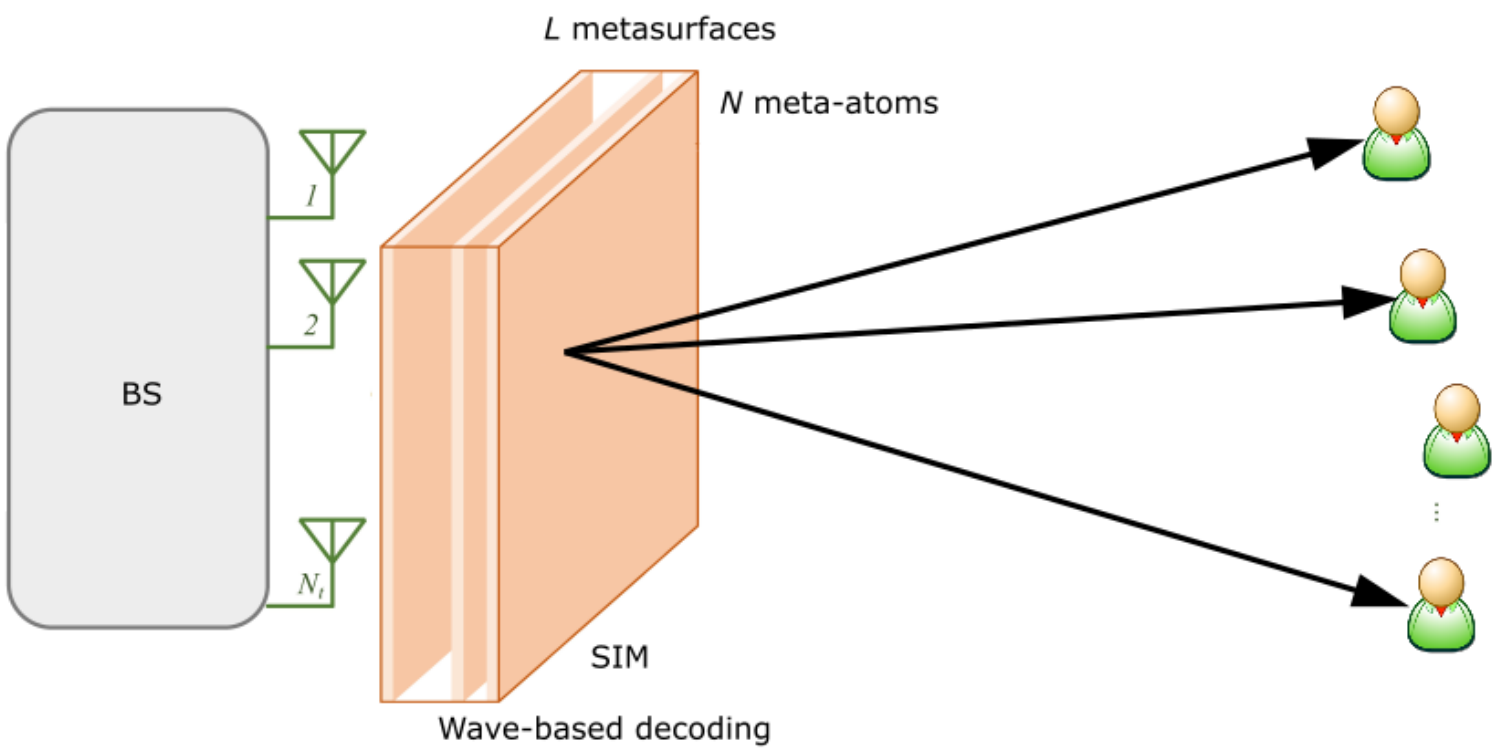}
			\caption{A SIM-aided MIMO system. }
			\label{Fig01}
		\end{center}
	\end{figure}

	On this basis, let $ \theta_{n}^{l}\in [0,2\pi), n \in \mathcal{N}, l \in \mathcal{L} $ be the phase shift by  the $ n $th meta-atom   on the   surface layer $ l $. Also, we denote  $ \phi_{n}^{l} =e^{j \theta_{n}^{l}}$, and $ \bPhi_{l}=\diag(\bphi^{l})\in \mathbb{C}^{N \times N} $, where $ \bphi^{l} =[\phi^{l}_{1}, \dots, \phi^{l}_{N}]^{\T}\in \mathbb{C}^{\textcolor{black}{N} \times 1}$.\footnote{Herein, we consider  phase shifts, which are continuously-adjustable and their modulus equals to $ 1 $ to evaluate  large SIM-aided MIMO communications. Practical issues  such as the consideration of discrete phase shifts \cite{Liu2022} is the topic of future work.} In addition, $ \bW^{l}\in \mathbb{C}^{N \times N}, l \in \mathcal{L}/\{1\} $ denotes the  coefficient matrix between     layer $ (l-1)$ and     layer $ l $. In particular, its entries from  meta-atom  $ \tilde{n} $ on  layer $ (l-1) $ to  meta-atom  $ n $ on  layer $ l, \forall l \in \mathcal{L} $  are given by 
	\begin{align}
		w_{n,\tilde{n}}^{l}=\frac{A_{t}cos x_{n,\tilde{n}}^{l}}{r_{n,\tilde{n}}^{l}}\left(\frac{1}{2\pi r^{l}_{n,\tilde{n}}}-j\frac{1}{\lambda}\right)e^{j 2 \pi r_{n,\tilde{n}}^{l}/\lambda},\label{deviationTransmitter}
	\end{align}
	where $ A_{t} $ is the area of each  meta-atom  at the SIM, $ x_{n,\tilde{n}}^{l} $ denotes the angle between the normal direction of the  transmit metasurface layer  $ (l-1) $ and the propagation direction,  $ r_{n,\tilde{n}}^{l} $, is the respective transmission distance. Moreover, let $ \bw^{1}_{k} \in \mathbb{C}^{N \times  1} $  express the  coefficient from the transmit antenna array. Thus, the impact of the  SIM can be expressed as
	\begin{align}
		\bG=\bPhi_{L}\bW^{L}\cdots\bPhi_{2}\bW^{2}\bPhi_{1}\in \mathbb{C}^{N \times N}.\label{TransmitterSIM}
	\end{align}

	Let $ \bh_{k}\in \mathbb{C}^{N \times 1} , \forall k \in \mathcal{K}$ express the  channel between the last  layer and UE $ k $ that is described by  the correlated Rician fading distribution as
	\begin{align}
		\bh_{k}=\sqrt{\beta_{k}}\left(\sqrt{\frac{\kappa_{k}}{1+\kappa_{k}}}\bh_{k,\mathrm{LoS}}+\sqrt{\frac{1}{1+\kappa_{k}}}	\bh_{k,\mathrm{NLoS}}\right)~~\forall k\in \mathcal{K}.\label{channel}
	\end{align}
	In \eqref{channel}, $ \kappa_{k} $ is the Rician factor, $ \beta_{k} $ is the channel gain,	$ \bh_{k,\mathrm{NLoS}} \in \mathbb{C}^{N \times 1}$ is the LoS component, and $ \bh_{k,\mathrm{NLoS}}\sim \mathcal{CN}(\b0,\bR) \in \mathbb{C}^{N \times 1}$ is the NLoS component  with $ \bR\in \mathbb{C}^{N\times N}  $  representing the spatial correlation of each surface. This correlation  is obtained $\forall n\in \mathcal{N}, \tilde{n}\in \mathcal{N}$ as \cite{Bjoernson2020}
	\textcolor{black}{	\begin{align}
			[\bR_{\mathrm{SIM}}]_{\tilde{n},n}&=\mathrm{sinc}(2 \|\bu_{n}-\bu_{\tilde{n}}\|/\lambda), n,\tilde{n}=1,\ldots,N\label{corC}
		\end{align}
		where  $\bu_{n}=[0, i(n)d_{\mathrm{H}}, j(n)d_{\mathrm{V}}]^{\T}$ with $i(n)=\mod(n-1,N_{x})$ and $j(n)=\floor{(n-1)/N_{x}}$ being the horizontal and vertical indices of element $n$, respectively. $N_{x}$ and $N_{y}$ are the elements per row and column, while $d_{\mathrm{H}} $ and $d_{\mathrm{V}}$ denote the horizontal width and the vertical height.}
	
	\section{Downlink Data Transmission}\label{PerformanceAnalysis}
	During the downlink transmission and based on wave-based beamforming \textcolor{black}{\cite{An2023c}}, the received signal at the $ k $-th UE is written as
	\begin{align}
		y_{k}=\bh_{k}^{\H}\bG \sum_{i=1}^{K}\bw_{i}^{1}\sqrt{p}_{i}x_{i}+n_{k},~~~\forall k \in \mathcal{K}
	\end{align}
	where $ x_{i} $ is
	the  information symbol intended for the $ k $-th UE, which has
	a zero mean and unit variance. Also, $ p_i $ is the power corresponding to the $ k $-th UE 	with $ \sum_{i=1}^{K}p_{i} \le P_{\mathrm{T}}$, where  $ P_{\mathrm{T}} $ is  the total transmit power  at the BS. Also $ n_{k}\sim \mathcal{CN}(0,\sigma_{k}^{2}) $ denotes  the additive white Gaussian noise (AWGN) with $ \sigma_{k}^{2} $ expressing its variance at UE $ k $.

	The downlink achievable SE of UE $ k $ is given by
	\begin{align}
		\mathrm{SE}	=\sum_{k=1}^{K}\log_{2}\left ( 1+\gamma_{k}\right)\!,\label{LowerBound}
	\end{align}
	where  $ \gamma_{k}$ denotes the downlink  signal-to-interference-plus-noise ratio (SINR), which is written according to the UaTF bounding technique \cite{Bjoernson2017} as
	\begin{align}
		\gamma_{k}=	\frac{p_{k}	|\EE\{\bh_{k}^{\H}\bG \bw_{k}^{1}\}|^{2}}{\sum_{i=1}^{K}p_{i}\EE\{|\bh_{k}^{\H}\bG \bw_{i}^{1}|^{2}\}-	p_{k}|\EE\{\bh_{k}^{\H}\bG \bw_{k}^{1}\}|^{2}+\sigma_{k}^{2}},\label{gamma1}
	\end{align}
	where  is assumed  that UE $ k $ has knowledge of the average  effective channel.
	
	\begin{proposition}\label{proposition:DLSINR}
		The  achievable SINR of UE $k$ for a given SIM during the downlink transmission  is provided by \eqref{gamma2}.
		\begin{figure*}
			\begin{align}
				\gamma_{k}=	\frac{p_{k}	\kappa_{k}|\bh_{k,\mathrm{LoS}}^{\H}\bG \bw_{k}^{1}|^{2}}{\sum_{i=1}^{K}p_{i}\tr(\bG \bw_{i}^{1}\bw_{i}^{1^{\H}}\bG^{\H}\bR)+\sum_{i\ne k}^{K}p_{i}	\kappa_{k}\bh_{k,\mathrm{LoS}}^{\H}\bG \bw_{i}^{1}\bw_{i}^{1^{\H}}\bG^{\H}\bh_{k,\mathrm{LoS}}+\frac{\sigma_{k}^{2}(1+\kappa_{k})}{\beta_{k}}}\label{gamma2}.
			\end{align}
			\hrulefill
		\end{figure*}
	\end{proposition}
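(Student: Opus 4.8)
The plan is to evaluate the three statistical averages appearing in the UatF expression \eqref{gamma1} one at a time, exploiting that the cascaded SIM response $\bG$ of \eqref{TransmitterSIM} and the deterministic vectors $\bh_{k,\mathrm{LoS}}$ are fixed once the phase configuration $\{\bPhi_{l}\}$ is given, so that the only randomness enters through the NLoS terms $\bh_{k,\mathrm{NLoS}}\sim\cC\cN(\b0,\bR)$. First I would compute the mean effective channel in the numerator: inserting \eqref{channel} and using $\EE\{\bh_{k,\mathrm{NLoS}}\}=\b0$ gives $\EE\{\bh_{k}^{\H}\bG\bw_{k}^{1}\}=\sqrt{\beta_{k}}\sqrt{\kappa_{k}/(1+\kappa_{k})}\,\bh_{k,\mathrm{LoS}}^{\H}\bG\bw_{k}^{1}$, so the numerator of \eqref{gamma1} equals $p_{k}\beta_{k}\tfrac{\kappa_{k}}{1+\kappa_{k}}|\bh_{k,\mathrm{LoS}}^{\H}\bG\bw_{k}^{1}|^{2}$.

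Next I would handle the second-order terms $\EE\{|\bh_{k}^{\H}\bG\bw_{i}^{1}|^{2}\}$ by splitting $\bh_{k}$ into its deterministic LoS part and its zero-mean NLoS part. Expanding the squared modulus, the two cross terms vanish under the expectation because the NLoS component is zero-mean, which leaves a deterministic term $\beta_{k}\tfrac{\kappa_{k}}{1+\kappa_{k}}|\bh_{k,\mathrm{LoS}}^{\H}\bG\bw_{i}^{1}|^{2}$ plus $\beta_{k}\tfrac{1}{1+\kappa_{k}}\EE\{|\bh_{k,\mathrm{NLoS}}^{\H}\bG\bw_{i}^{1}|^{2}\}$. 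For the latter I would apply the quadratic-form identity $\EE\{\bx^{\H}\bA\bx\}=\tr(\bA\bR)$, valid for $\bx\sim\cC\cN(\b0,\bR)$, with $\bA=\bG\bw_{i}^{1}\bw_{i}^{1^{\H}}\bG^{\H}$, which gives $\beta_{k}\tfrac{1}{1+\kappa_{k}}\tr(\bG\bw_{i}^{1}\bw_{i}^{1^{\H}}\bG^{\H}\bR)$.

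Finally I would substitute both results into \eqref{gamma1}. In the denominator, the subtracted quantity $p_{k}|\EE\{\bh_{k}^{\H}\bG\bw_{k}^{1}\}|^{2}$ cancels exactly the $i=k$ LoS contribution contained in $\sum_{i}p_{i}\EE\{|\bh_{k}^{\H}\bG\bw_{i}^{1}|^{2}\}$, so the residual interference reduces to the sum over $i\neq k$ of the LoS terms, plus the full aggregate of the correlation-trace terms over all $i$, plus the noise $\sigma_{k}^{2}$. Dividing numerator and denominator by $\beta_{k}\kappa_{k}/(1+\kappa_{k})$ and rewriting $|\bh_{k,\mathrm{LoS}}^{\H}\bG\bw_{i}^{1}|^{2}=\bh_{k,\mathrm{LoS}}^{\H}\bG\bw_{i}^{1}\bw_{i}^{1^{\H}}\bG^{\H}\bh_{k,\mathrm{LoS}}$ then yields \eqref{gamma2}.

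There is no genuine analytical obstacle here; the derivation is essentially bookkeeping. The points that require care are: (i) propagating the Rician weights $\kappa_{k}/(1+\kappa_{k})$ and $1/(1+\kappa_{k})$ correctly through the cross-term cancellations; (ii) verifying that the term removed from the received-power sum is precisely the coherent LoS beamforming gain for $i=k$ and nothing else; and (iii) performing the final normalization consistently, so that the noise picks up the factor $\sigma_{k}^{2}(1+\kappa_{k})/\beta_{k}$. The key structural observation is that \eqref{gamma2} depends on the small-scale fading only through $\bR$, hence only on statistical CSI, which is exactly what makes the subsequent alternating optimization over the SIM phases $\{\bPhi_{l}\}$ and the powers $\{p_{k}\}$ well posed.
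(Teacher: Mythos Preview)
Your proposal is correct and follows exactly the paper's approach: compute the mean via the LoS component, evaluate the second moment by splitting into LoS and NLoS parts and applying the trace identity for Gaussian quadratic forms, then substitute and simplify. One minor slip: the common factor you divide out at the end should be $\beta_{k}/(1+\kappa_{k})$, not $\beta_{k}\kappa_{k}/(1+\kappa_{k})$; your own stated noise term $\sigma_{k}^{2}(1+\kappa_{k})/\beta_{k}$ already reflects the correct normalization.
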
 
	\begin{proof}
		The numerator becomes
		\begin{align}
			|\EE\{\bh_{k}^{\H}\bG \bw_{k}^{1}\}|^{2}=\beta_{k}\frac{\kappa_{k}}{1+\kappa_{k}}|\bh_{k,\mathrm{LoS}}^{\H}\bG \bw_{k}^{1}|^{2}.\label{num1}
		\end{align}
		Regarding the denominator of \eqref{gamma1}, the first term is written as
		\begin{align}
			&	\EE\{|\bh_{k}^{\H}\bG \bw_{i}^{1}|^{2}\}=\tr(\bh_{k}^{\H}\bG \bw_{i}^{1}\bw_{i}^{1^{\H}}\bG^{\H}\bh_{k})\label{denom1}\\
			&=\beta_{k}\frac{1}{1+\kappa_{k}}\tr(\bG \bw_{i}^{1}\bw_{i}^{1^{\H}}\bG^{\H}\bR)\nn\\
			&	+\beta_{k}\frac{\kappa_{k}}{1+\kappa_{k}}\bh_{k,\mathrm{LoS}}^{\H}\bG \bw_{i}^{1}\bw_{i}^{1^{\H}}\bG^{\H}\bh_{k,\mathrm{LoS}},\label{denom2}
		\end{align}
		where, in \eqref{denom1}, we have applied that  $\bx^{\H}\by = \tr(\by \bx^{\H})$ for any vectors $\bx$, $\by$.
		By substituting \eqref{num1} and \eqref{denom2} in \eqref{gamma1}, we obtain the achievable SINR.
	\end{proof}
	
	\section{Problem Formulation  and  Optimization}\label{PSConfig}
	The  maximization of the sum SE regarding the phase shifts  of each surface and the allocated power is of great importance.
	
	\subsection{Problem Formulation}
	The maximization problem is formulated as
	\begin{subequations}\label{eq:subeqns}
		\begin{align}
			(\mathcal{P})~~&\max_{\bphi_{l},\bp} 	\;	f(\bphi_{l},\bp)=\sum_{k=1}^{K}\log_2\left(1+\frac{D_k}{I_k}\right)\label{Maximization1} \\
			&~	\mathrm{s.t}~~~	\bG=\bPhi_{L}\bW^{L}\cdots\bPhi_{2}\bW^{2}\bPhi_{1},
			\label{Maximization4} \\
			&\;\quad\;\;\;\;\;\!\!~\!		\bPhi_{l}=\diag(\phi^{l}_{1}, \dots, \phi^{l}_{N}), l \in \mathcal{L},
			\label{Maximization5} \\
			&\;\quad\;\;\;\;\;\!\!~\!		|	\phi^{l}_{n}|=1, n \in \mathcal{N}, l \in \mathcal{L},	\label{Maximization7}\\
			&\;\quad\;\;\;\;\;\!\!~\!	\sum_{i=1}^{K}p_{i} = P_{\mathrm{T}}\label{Maximization8}\\
			&\;\quad\;\;\;\;\;\!\!~\!	p_{k} \ge 0, \forall k \in \mathcal{K},
		\end{align}
	\end{subequations}
	where $ D_k $ and $ I_k $ are the numerator and denominator of $ \gamma_{k} $ obtained in Proposition \ref{proposition:DLSINR}. Also, we have defined the vector $ \bp=[p_{1}, \ldots,p_{K}]^{\T}$. Note that
	the constraint \eqref{Maximization7} expresses that each RIS element provides only a phase shift while   \eqref{Maximization8} corresponds to the maximum power constraint.

	The non-convexity optimization problem $ 	(\mathcal{P}) $ and its dependence on the  unit-modulus constraint with respect to $ 	\bphi_{l} $ make the solution challenging. For this reason, we resort to alternating optimization (AO). According to this technique,  $ \bphi_{l} $
	and  $ \bp $ will be optimized individually in an  iterative manner. Specifically, first, we  find  the optimum $ \bphi_{l} $ for  a fixed $ \bp $. During the next step, we solve for $ \bp $ with $ \bphi_{l} $ fixed. The objective converges to  its optimum value  by iterating this process, which leads to the increase of $f(\bphi_{l},\bp) $ after each step  until a specific point because of the upper-bound coming from  the power constraint \eqref{Maximization8}.

	\subsection{SIM Optimization}
	Until now,  $ 	\bphi_{l}$ was assumed fixed.  However, to exploit each metasurface towards wave-based beamforming  while   maximizing  \eqref{LowerBound}, the optimization of each  $ 	\bphi_{l} $ has to take place. Its presence is observed  inside the matrix $ \bG $, appearing in $ D_k $ and $ I_k $. Hence, the  maximization problem regarding $ 	\bphi_{l}  $ is described as
	\begin{subequations}\label{eq:subeqns1}
		\begin{align}
			(\mathcal{P}1)~~&\max_{\bphi_{l}} 	\;	f(\bphi_{l})\label{Maximization11} \\
			&~	\mathrm{s.t}~~~	\bG=\bPhi_{L}\bW^{L}\cdots\bPhi_{2}\bW^{2}\bPhi_{1},
			\label{Maximization41} \\
			&\;\quad\;\;\;\;\;\!\!~\!		\bPhi_{l}=\diag(\phi^{l}_{1}, \dots, \phi^{l}_{N}), l \in \mathcal{L},
			\label{Maximization51} \\
			&\;\quad\;\;\;\;\;\!\!~\!		|	\phi^{l}_{n}|=1, n \in \mathcal{N}, l \in \mathcal{L},	\label{Maximization71}
		\end{align}
	\end{subequations}
	where  the maximization problem  $ 	(\mathcal{P}1) $ is  non-convex regarding  $ 	\bphi_{l} $, and it obeys to  a unit-modulus constraint with respect to $ 	\phi^{l}_{n} $. Application of the projected gradient 	ascent algorithm until convergence while taking into account the unit-modulus constraint  results in a locally optimal solution to $ 	(\mathcal{P}1) $.
	
	The proposed algorithm suggests starting from $ \bphi_{l}^{0} $, and then shifting along the gradient of   $  f(\bphi_{l}) $. The new point $ \bphi_{l} $ is  projected onto  $ \Phi_{l} $ to hold the new points in the feasible set. Specifically, the unit-modulus constraint  means that $ \phi^{l}_{n} $ has to be found inside the unit circle. $ P_{\Phi_{l}}(\cdot) $ is the projection onto $ \Phi_{l} $. Hence, we have
	\begin{align}
		\bar{u}_{l,n}=\left\{
		\begin{array}{ll}
			\frac{u_{l,n}}{|u_{l,n}|} & u_{l,n}\ne 0 \\
			e^{j \phi^{l}_{n}}, \phi^{l}_{n} \in [0, 2 \pi] &u_{l,n}=0 \\
		\end{array}, n=1, \ldots,N,
		\right.
	\end{align}
	where   the  vector $ \bar{\bu}_{l} $ of $ P_{\Phi_{l}}(\bu_{l}) $ is a given point.
	
	The algorithm is described by the following iteration
	\begin{align}
		\bphi_{l}^{i+1}&=P_{\Phi_{l}}(\bphi_{l}^{i}+\mu_{i}\nabla_{\bphi_{l}}f(\bphi_{l}^{i}))\label{p1}.
	\end{align}
	The Armijo-Goldstein backtracking line search method provides the step size, which is $ \mu_{i} = L_{i}\kappa^{m_{i}} $, where   $ \kappa \in (0,1) $ and $ L_i>0 $. Note that  $ m_{i} $ is the
	smallest positive integer that  satisfies
	\begin{align}
		f(\bphi_{l}^{i+1})\geq	Q_{L_{i}\kappa^{m_{i}}}(\bphi_{l}^{i};\bphi_{l}^{i+1}),
	\end{align}
	where 
	\begin{align}
		\!\!	Q_{\mu}(\bphi_{l};\bx)\!=\!f(\bphi_{l})\!+\!\langle	\nabla_{\bphi_{l}}f(\bphi_{l}),\bx\!-\!\bphi_{l}\rangle\!-\!\frac{1}{\mu}\|\bx-\bphi_{l}\|^{2}_{2}
	\end{align}
	is the  quadratic approximation of $ f(\bphi_{l}) $.

	\begin{proposition}\label{propositionGradient}
		The gradient of $f(\bphi_{l}) $ regarding  $\bphi_{l}^{*}$ is obtained in	closed-form as	
		\begin{align}
			\nabla_{\bphi_{l}}f(\bphi_{l})&=\frac{1}{\log_{2}(e)}\sum_{k=1}^{K}\frac{{I}_{k}\nabla_{\bphi_{l}}D_k-D_k\nabla_{\bphi_{l}}{I}_{k}}{(1+\gamma_{k}){I}_{k}},\label{grad11}	
		\end{align}
		where
		\begin{align}
			&\nabla_{\bphi_{l}}D_k 
			=p_{k}	\kappa_{k}\diag(\bC_{l}^{*}\bw_{k}^{1^{*}}\bw_{k}^{1^{\T}}\bG^{\T}	\bh_{k,\mathrm{LoS}}^{*}\bh_{k,\mathrm{LoS}}^{\T}\bA_{l}^{*})\label{differentialPhi71},\\
			&\nabla_{\bphi_{l}}{I}_{k}
			=\sum_{i=1}^{K}p_{i}\diag(\bC_{l}^{*}\bw_{i}^{1^{*}}\bw_{i}^{1^{\T}}\bG^{\T}	\bR\bA_{l}^{*})\nn\\
			&+\sum_{i\ne k}^{K}p_{i}	\kappa_{k}\diag(\bC_{l}^{*}\bw_{i}^{1^{*}}\bw_{i}^{1^{\T}}\bG^{\T}	\bh_{k,\mathrm{LoS}}^{*}\bh_{k,\mathrm{LoS}}^{\T}\bA_{l}^{*})\label{gradI10}
		\end{align}
		with  $ 	\bA_{l}=\bPhi_{L}\bW^{L}\cdots\bPhi_{l+1}\bW^{l+1} $, and $\bC_{l}= \bW^{l}\bPhi_{l-1}\bW^{l-1}\cdots \bPhi_{1} $.
	\end{proposition}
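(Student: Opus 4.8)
The plan is to differentiate the sum-rate objective $f(\bphi_l)=\sum_k \log_2(1+D_k/I_k)$ by the chain rule, reducing everything to the computation of $\nabla_{\bphi_l}D_k$ and $\nabla_{\bphi_l}I_k$. First I would note that since $\gamma_k=D_k/I_k$, the scalar chain rule gives $\nabla_{\bphi_l}\log_2(1+\gamma_k)=\frac{1}{\ln 2}\,\frac{\nabla_{\bphi_l}\gamma_k}{1+\gamma_k}$, and the quotient rule gives $\nabla_{\bphi_l}\gamma_k=(I_k\nabla_{\bphi_l}D_k-D_k\nabla_{\bphi_l}I_k)/I_k^2$; multiplying through and summing over $k$ yields \eqref{grad11} with the $(1+\gamma_k)I_k$ denominator (the second $I_k$ from $I_k^2$ cancels against the numerator's $I_k$ factor). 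The substantive work is therefore the two gradients of quadratic forms in $\bG$.

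Next I would exploit the key structural fact that $\bG=\bA_l\,\bPhi_l\,\bC_l$ with $\bA_l=\bPhi_L\bW^L\cdots\bPhi_{l+1}\bW^{l+1}$ and $\bC_l=\bW^l\bPhi_{l-1}\cdots\bPhi_1$, so that $\bG$ is \emph{linear} in $\bPhi_l=\diag(\bphi^l)$ — the only factor containing $\bphi_l$. Each term in $D_k$ and $I_k$ is a quadratic form $\bv^\H\bG\,\bw_i^1\bw_i^{1^\H}\bG^\H\bv$ (with $\bv=\sqrt{\kappa_k}\,\bh_{k,\mathrm{LoS}}$ in $D_k$ and in the interference part of $I_k$) or $\tr(\bG\,\bw_i^1\bw_i^{1^\H}\bG^\H\bR)$. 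Writing $\bG^\H$ in terms of $\bPhi_l^\H=\diag(\bphi_l^*)$ and treating $\bphi_l$ and $\bphi_l^*$ as independent (Wirtinger calculus), each such term is bilinear, and its derivative with respect to $\bphi_l^*$ is obtained by fixing the $\bphi_l$ inside $\bG$ and differentiating only the $\bphi_l^*$ in $\bG^\H$. The identity I would invoke is $\partial\,\tr(\bM\diag(\bphi_l^*)\bN)/\partial\bphi_l^*=\diag(\bN\bM)$ for conformable $\bM,\bN$; applying it to $\tr\!\big(\bC_l^*\bw_i^{1^*}\bw_i^{1^\T}\bG^\T\,\bX\,\bA_l^*\,\diag(\bphi_l^*)\big)$ after cyclic permutation of the trace (with $\bX=\bh_{k,\mathrm{LoS}}^*\bh_{k,\mathrm{LoS}}^\T$ or $\bR$, respectively) gives exactly the $\diag(\bC_l^*\bw_i^{1^*}\bw_i^{1^\T}\bG^\T\bX\bA_l^*)$ expressions in \eqref{differentialPhi71}–\eqref{gradI10}. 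I would then reassemble: $\nabla_{\bphi_l}D_k$ picks up only the $i=k$ LoS term with prefactor $p_k\kappa_k$, while $\nabla_{\bphi_l}I_k$ collects the full correlation sum $\sum_i p_i(\cdots\bR\cdots)$ and the reduced interference sum $\sum_{i\ne k}p_i\kappa_k(\cdots\bh_{k,\mathrm{LoS}}^*\bh_{k,\mathrm{LoS}}^\T\cdots)$, matching \eqref{gradI10}.

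The main obstacle is bookkeeping rather than any deep idea: one must be careful that differentiating a quadratic form $\bphi_l$-in-$\bG$ times $\bphi_l^*$-in-$\bG^\H$ with respect to $\bphi_l^*$ leaves the $\bphi_l$-dependence frozen inside the surviving $\bG$ factor (hence the $\bG^\T$, not $\bG^\H$, appearing in the stated formulas — a consequence of taking the conjugate of the $\bG$-term and transposing under the trace), and that the conjugates/transposes on $\bA_l$, $\bC_l$, $\bw_i^1$, and $\bh_{k,\mathrm{LoS}}$ are tracked consistently through the cyclic trace manipulations. A secondary point worth stating explicitly is why the noise term $\sigma_k^2(1+\kappa_k)/\beta_k$ in $I_k$ contributes nothing: it is independent of $\bphi_l$, so $\nabla_{\bphi_l}I_k$ involves only the two $\bphi_l$-dependent sums. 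Once these conventions are pinned down, the formulas \eqref{grad11}–\eqref{gradI10} follow by direct substitution, completing the proof.
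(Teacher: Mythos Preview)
Your proposal is correct and follows essentially the same route as the paper: the paper also factors $\bG=\bA_l\bPhi_l\bC_l$, computes the matrix differential $d(\bG)=\bA_l\,d(\bPhi_l)\,\bC_l$, writes each quadratic/trace term as $\tr(\,\cdot\,d(\bPhi_l^\H))$, and then extracts the Wirtinger gradient as the diagonal of the coefficient matrix, exactly the identity you invoke. The only cosmetic difference is that the paper phrases the computation via differentials rather than stating the $\partial\,\tr(\bM\diag(\bphi_l^*)\bN)/\partial\bphi_l^*=\diag(\bN\bM)$ rule directly, and it does not spell out the outer chain/quotient rule step for \eqref{grad11} that you (correctly) supply.
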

	\begin{proof}
		Please see Appendix~\ref{proposition1}.	
	\end{proof}
	
	\textcolor{black}{The SIM optimization design, based on the gradient ascent, appears a significant advantage  because  the gradient ascent is obtained in a closed form. It has low computational complexity because it consists of simple matrix operations. Specifically,  the complexity of \eqref{Maximization11} for large SIMs is $ \mathcal{O}(N_{t}N^2+LN^{2} +KN^{3} )$, and the complexity of \eqref{grad11} is similar. In other words,  the number of  meta-atoms of each surface has a higher  impact.}
	\subsection{Power Optimization}
	Given a fixed $	\bPhi_{l}$, we focus on the optimization with respect to $ \bp $. Specifically, we have 
	\begin{subequations}\label{eq:subeqns2}
		\begin{align}
			(\mathcal{P}2)~~&\max_{\bp} 	\;	f(\bp)\label{Maximization3} \\
			&\;\quad\;\;\;\;\;\!\!~\!	\sum_{i=1}^{K}p_{i} = P_{\mathrm{T}}, ~	p_{k} \ge 0, \forall k \in \mathcal{K}.
		\end{align}
	\end{subequations}
	
	The nonconvexity of problem $	(\mathcal{P}2)$ leads us to obtain   a solution which is locally optimal. For this reason, we  apply a weighted minimum mean square error (MMSE) reformulation of the sum SE.  To this end, we denote $\bc\!=\![c_{1}, \ldots,c_{K} ]^{\T}  $. Then,  the SINR $\gamma_{k} $ can be written in terms of  the vector $ \bp $ as
	\begin{align}
		\gamma_{k} =\frac{p_{k} q_{k}}{\bc^{\T}\bp+u_{k}^{2}},\label{SINR10}
	\end{align}
	where
	\begin{align}
		&q_{k}=\kappa_{k}|\bh_{k,\mathrm{LoS}}^{\H}\bG \bw_{k}^{1}|^{2}, c_{k}=\beta_{k}\frac{1}{1+\kappa_{k}}\tr(\bG \bw_{k}^{1}\bw_{k}^{1^{\H}}\bG^{\H}\bR),\nn\\
		&t_{k}^{2}=\frac{\sigma_{k}^{2}(1+\kappa_{k})}{\beta_{k}}, c_{i}=\beta_{k}\frac{1}{1+\kappa_{k}}\tr(\bG \bw_{i}^{1}\bw_{i}^{1^{\H}}\bG^{\H}\bR)\nn\\
		&	+\beta_{k}\frac{\kappa_{k}}{1+\kappa_{k}}\bh_{k,\mathrm{LoS}}^{\H}\bG \bw_{i}^{1}\bw_{i}^{1^{\H}}\bG^{\H}\bh_{k,\mathrm{LoS}},~\forall i\ne k.
	\end{align}
	
	Now, we consider the  single‐input and single‐output (SISO) channel model that comes from the SINR in \eqref{SINR10}, which is given by
	\begin{align}
		\tilde{y}_{k}=\sqrt{p_{k} q_{k}}s_{k}+\sum_{i=1}^{K}\sqrt{p_{i}c_{i}}s_{i}+n_{k},
	\end{align}
	where $ n_{k}\sim \cC\cN\left( 0, u_{k}^{2} \right) $ while $ s_{i }\in \mathbb{C} $ is the data signal with unit variance, and $ \tilde{y}_{k} $ is the received signal.
	
	Then, the  receiver  estimates $s_{k}$, i.e., $\hat{s}_{k}=v_{k}^{*} 	\tilde{y}_{k} $ with   $ v_{k} $ being a receiver coefficient. The corresponding  mean square error 	$ e_{k}(\bp,v_{k}) =[|\hat{s}_{k}-s_{k}|^{2}]$ becomes
	\begin{align}
		e_{k}(\bp,v_{k})=v_{kg}^{2}\left(p_{k}q_{k}+\bc_{k}^{\T}\bp+u_{k}^{2}\right)-2 v_{k}\sqrt{q_{k}p_k}+1.\label{mse1}
	\end{align}
	
	For a given $ \bp $, $ 	v_{k} $ is obtained by the  minimization of the MSE as 
	\begin{align}
		v_{k}=\frac{\sqrt{p_{k} q_{k}}}{p_{k}q_{k}+\sum_{i=1}^{K}p_{i}c_{i}+u_{k}^{2}}.
	\end{align}
	
	Inserting $ v_{k} $ into \eqref{mse1}, $ e_{k} $ becomes $ 1/\left(1+\gamma_{k} \right) $. Based on  the  weighted MMSE method, let the auxiliary weight $ d_{k}\ge 0 $ for the MSE $  e_{k}$ and consider the problem
	\begin{align}\begin{split}
			(\mathcal{P}2.1)\min_{\substack{\bp\ge 0,\\\{v_{k}, d_{k}\ge 0: k=1,\ldots,K\}}} 	&		{K}\sum_{k=1}^{K}d_{k} e_{k}(\bp,\bv_{k})-\ln(d_{k})\\
			\mathrm{s.t}~~\;\!&\sum_{i=1}^{K}p_{i}\le P_{\mathrm{T}}.
		\end{split}\label{Maximization10} 
	\end{align}
	
	Problems $ 	(\mathcal{P}2) $ and $ 	(\mathcal{P}2.1) $ are equivalent, and thus, are subject to the same  solution.  The solution of $ 	(\mathcal{P}2.1) $ can be provided in closed form as
	\begin{align}
		p_{i}=\min\left(P_{\mathrm{T}},\frac{q_{k}d_{k}^{2}v_{k}^{2}}{\left(q_{k}d_{k}v_{k}^{2}+\sum_{i=1}^{K}d_{i}v_{i}^{2}c_{i}\right)^{2}}\right).
	\end{align}

	\textcolor{black}{The power allocation presents a similar complexity to the SIM optimization design since it consists of similar  matrix operations to $	(\mathcal{P}1)$, i.e., its complexity is  $ \mathcal{O}(N_{t}N^2+LN^{2} +KN^{3} )$.}

	\begin{remark}
		Both algorithms, corresponding to Problems  	$(\mathcal{P}1)$ and $(\mathcal{P}2)$, have low computation complexity and converge quickly. Note that  the achievement of  a local optimum, obtained from this optimization,  will make different initializations result in different solutions, which will be studied below.
	\end{remark}
	
	\section{Numerical Results}\label{Numerical}
	In this section, we present and evaluate the performance of the achievable sum SE of  large SIM-assisted multiuser communications with statistical CSI by showing both analytical results and Monte Carlo simulations. For the setup, we assume that the SIM is parallel to the $ x-y $ plane and centered  along the $ z-$axis at a height $ H_{\mathrm{BS}}=10~\mathrm{m} $. The spacing  between adjacent meta-atoms  is assumed to be $ \lambda/2 $, and  the size of each meta-atom  is $ \lambda/2 \times \lambda /2 $.  The thickness of the SIM is $ T_{\mathrm{SIM}}=5 \lambda $, while the spacing is $ d_{\mathrm{SIM}}= T_{\mathrm{SIM}}/L$.  Moreover, the locations of the  users  are randomly distributed  at a distance between $60\mathrm{m}$ and $80\mathrm{m}$.
	
	The distance between the $ \tilde{n}-$th meta-atom of the $ (l-1)-$st metasurface and the  $ {n}-$th meta-atom of the $ l-$st metasurface is given by $ d_{n,\tilde{n}}^{l}=\sqrt{d_{\mathrm{SIM}}^{2}+d_{n,\tilde{n}}^{2}} $, where
	\begin{align}
		\!\!	d_{n,\tilde{n}}\!=\!\frac{\lambda}{2}\sqrt{\lfloor | n-\tilde{n}|/N_{x}\rfloor^{2}\!+\![ \mathrm{mod}(|n-\tilde{n}|,N_{x})]^{2}}.
	\end{align}
	The transmission distance between the $ m $-th antenna and the $ \tilde{n} $-th meta-atom on the first metasurface layer is provided by \eqref{dist1}. Note that we have $ \cos x_{n,\tilde{n}}^{l}= d_{\mathrm{SIM}}/ d_{n,\tilde{n}}^{l}, \forall l \in \mathcal{L}$.
	
	\begin{figure*}
		\begin{align}
			{\small 	d_{\tilde{n},m}^{1}\!=\!\sqrt{\!d_{\mathrm{SIM}}^{2}\!+\!\Big[\!\Big(\!\mathrm{mod}(\tilde{n}\!-\!1, N_{x})\!-\!\frac{N_{x}\!-\!1}{2}\!\Big)\frac{\lambda}{2}\!-\!\Big(m\!-\!\frac{N_{t}\!+\!1}{2}\Big)\frac{\lambda}{2}\Big]^{2}\!+\!\Big(\!\lceil \tilde{n}/N_{x} \rceil\!-\!\frac{N_{y}\!+\!1}{2}\Big)^{2}\frac{\lambda_{2}}{4}}}.\label{dist1}
		\end{align}
		\hrulefill
	\end{figure*}
	
	The  path loss   is given by 
	\begin{align}
		\tilde \beta_k = C_{0} (d_k/\hat{d})^{-\alpha},
	\end{align} 
	where $ C_{0} =(\lambda_{2}/4 \pi \hat{d})$ is  the free space path loss at the reference distance of $ \hat{d}=1~ \mathrm{m}$, and $\alpha=2.5$ is the path-loss exponent.  The correlation matrix $ \bR_{\mathrm{SIM}}$ is obtained according to \eqref{corC}.  The  carrier frequency and the system bandwidth are $ 2~\mathrm{GHz} $ and $ 20~\mathrm{MHz} $, respectively.  Moreover, we assume $ N_{t}=8 $, $ K=8 $, $ N=200 $, and  $ L=4 $.
	
	In Fig. \ref{fig2}, we depict the achievable sum SE versus the number of   elements $ N $ of each surface while varying the number of surfaces $L$. First, it is shown that  the downlink sum SE increases with $ N $ for different $L$.   Moreover, an increase in the number of surfaces results in an increase in the sum SE.  In addition,  for the sake of comparison, we present the performance  in the case of instantaneous CSI for $L=4$ \textcolor{black}{\cite{An2023c}}, which performs better than the case of statistical CSI since the latter is obtained based on a lower bound that is optimized at every several coherence intervals instead of at each coherence interval. However, the statistical CSI modeling allows to  save significant overhead. \textcolor{black}{Moreover, we show the
		effect of the size of each surface element. We observe that  as the size of each 			surface element decreases, the correlation decreases, and the sum 		SE increases. } Notably, Monte Carlo (MC) simulations corroborate the analytical results.

	\begin{figure}%
		\centering
		\includegraphics[width=0.8\linewidth]{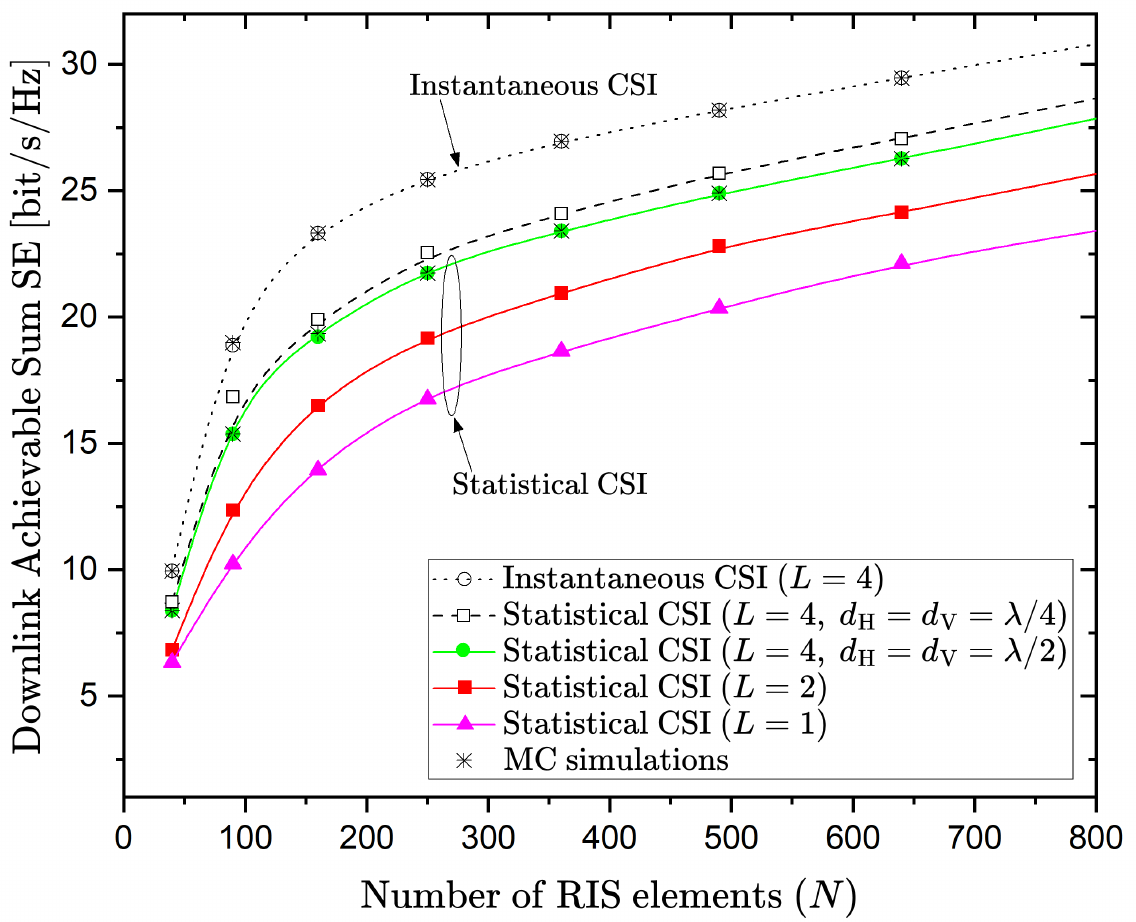}
		\caption{Achievable sum SE of the large SIM-aided MIMO architecture with respect to the number of meta-atoms  $ N $.}
		\label{fig2}
	\end{figure}
	
	Fig. \ref{fig3}  shows the sum SE versus the number of layers $ L $  of the SIM for the number of UEs $K$. When $N_{t}=K=8$, we observe that the sum SE improves until $ L=6 $  because the  SIM is able to mitigate the inter-user interference in the EM wave domain.   In particular, a significant improvement is observed compared to the single-layer SIM. Again, we illustrate the comparison between the cases corresponding to instantaneous and statistical CSI, where the latter exhibits worse performance for the benefit of lower overhead.

	\begin{figure}%
		\centering
		\includegraphics[width=0.8\linewidth]{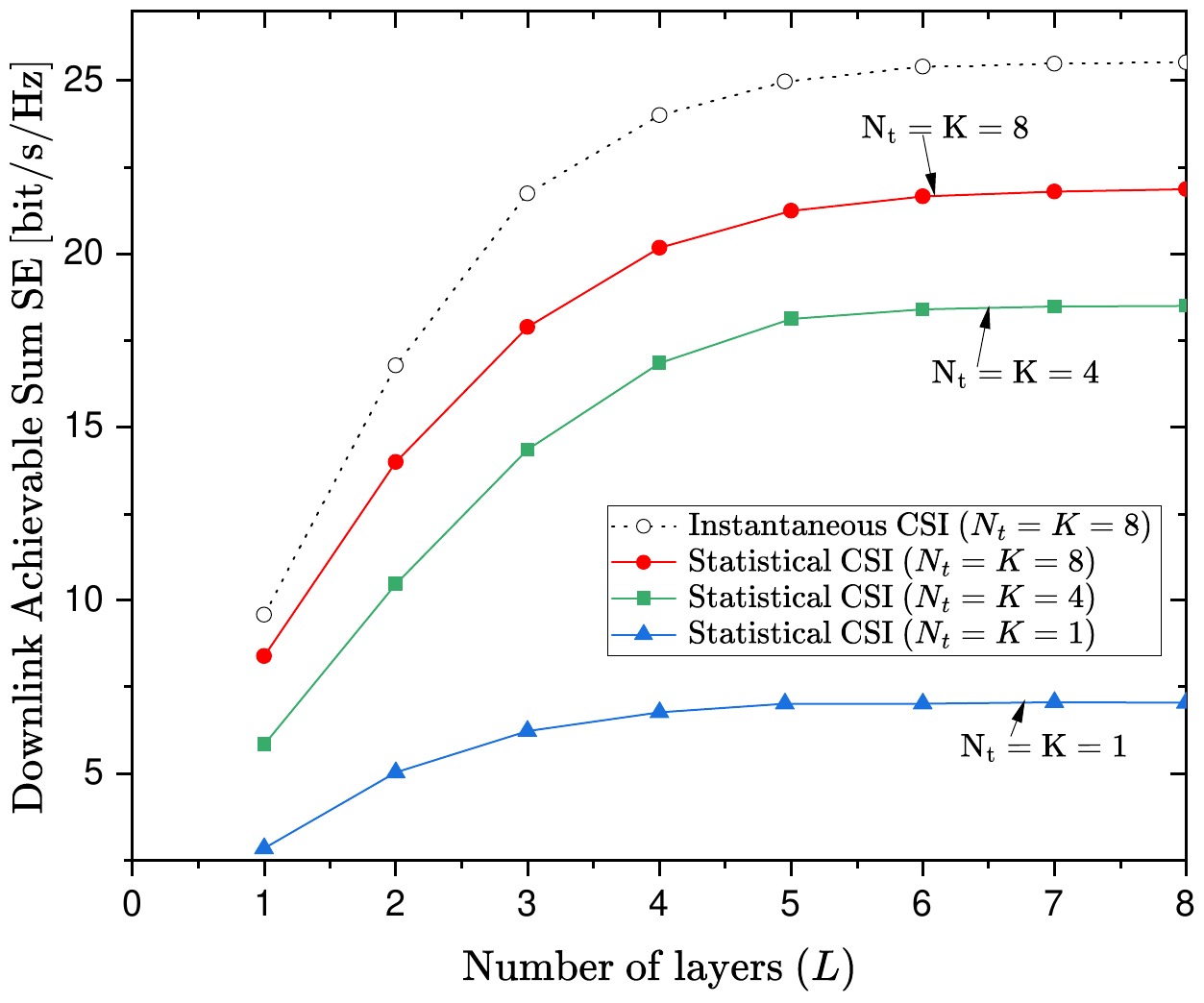}
		\caption{\textcolor{black}{Achievable sum SE of the large SIM-aided MIMO architecture with respect to the number of meta-surfaces  $ L $.}}
		\label{fig3}
	\end{figure}
	
	In Fig. \ref{fig4}, we depict the convergence of the proposed algorithm. The algorithm terminates  when the difference of the objective between the two last iterations is less than $10^{-5} $ or the number of iterations is larger than $ 130 $. It is shown that the algorithm converges to its maximum as the number of iterations increases.  Since Problem $ (\mathcal{P}) $ is non-convex, the algorithm does not converge to a globally optimal solution. This means that the algorithm  may converge to different points starting from different initial points. For this reason, we select the best   solutions after executing the algorithm from different initial points. Herein,  we depict the sum SE versus  the iteration count  for $ 5 $ different randomly generated initial points, and we observe that all points result in the same SE. Generally, the selection of $ 5 $ randomly generated initial points allows a good trade-off between performance and complexity.
	
	\begin{figure}%
		\centering
		\includegraphics[width=0.8\linewidth]{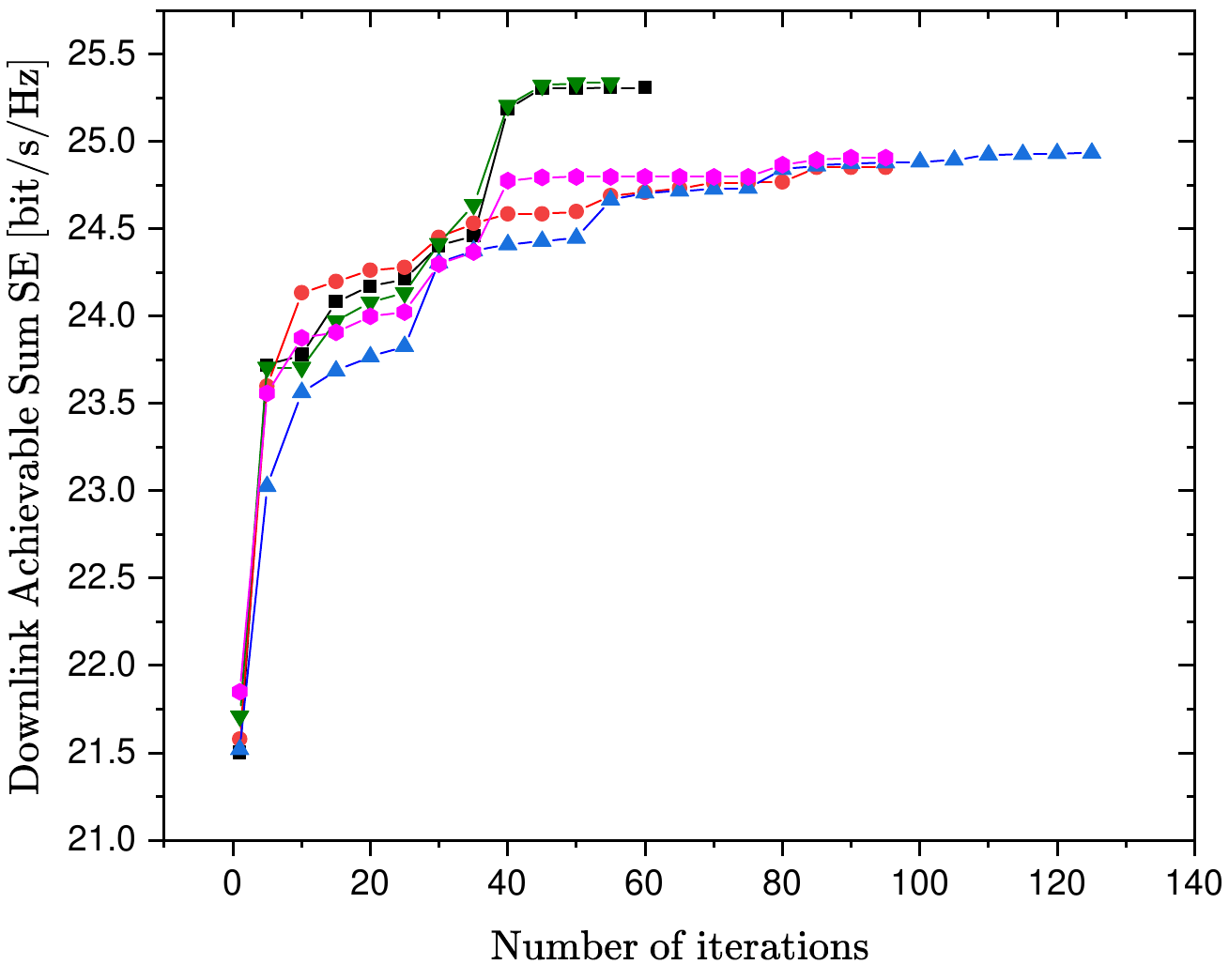}
		\caption{Achievable sum SE of the large SIM-aided MIMO architecture with respect to the number of iterations for   $ 5 $ different randomly generated initial points.}
		\label{fig4}
	\end{figure}

	\section{Conclusion} \label{Conclusion} 
	
	This paper provided  the achievable downlink SE of large SIM-aided multiuser communications over Ricean fading channels. In particular, we deduced a new, tractable expression for the downlink SE for large-metasurfaces as a function of  large-scale statistics while the downlink precoding takes place in the wave domain, in order to lower the computational burden and the processing latency. The  results were used to pursue an optimization based on statistical CSI that achieves lower overhead with respect to instantaneous CSI. Specifically, we proposed an AO algorithm that solves the optimization regarding the phase shifts of each surface of the SIM and the allocated power, which contributes to a reduction in processing latency due to lower overhead.

	\begin{appendices}
		\section{Proof of Proposition~\ref{propositionGradient}}\label{proposition1}	
		The proof  starts with the derivation of $ 	\nabla_{\bphi_{l}}D_k  $. To this end, we focus on the differential of $ |\bh_{k,\mathrm{LoS}}^{\H}\bG \bw_{k}^{1}|^{2} $. We have
		\begin{align}
			&	d(|\bh_{k,\mathrm{LoS}}^{\H}\bG \bw_{k}^{1}|^{2})=\bh_{k,\mathrm{LoS}}^{\H}d(\bG) \bw_{k}^{1}\bw_{k}^{1^{\H}}\bG^{\H}\bh_{k,\mathrm{LoS}}\nn\\
			&+	\bh_{k,\mathrm{LoS}}^{\H}\bG \bw_{k}^{1}\bw_{k}^{1^{\H}}d(\bG^{\H})\bh_{k,\mathrm{LoS}}\\
			&=\tr\big(\bh_{k,\mathrm{LoS}}^{\H}\bA_{l} d(\bPhi_{l})\bC_{l} \bw_{k}^{1}\bw_{k}^{1^{\H}}\bG^{\H}\bh_{k,\mathrm{LoS}}\big)\nn\\
			&+\tr(\bA_{l}^{\H}\bh_{k,\mathrm{LoS}}\bh_{k,\mathrm{LoS}}^{\H}\bG \bw_{k}^{1}\bw_{k}^{1^{\H}}\bC_{l}^{\H}d(\bPhi_{l}^{\H}))\label{num2}.	
		\end{align}
		In \eqref{num2}, we have substituted $d(\bG)=\bA_{l} d(\bPhi_{l})\bC_{l}$, where $ 	\bA_{l}=\bPhi_{L}\bW^{L}\cdots\bPhi_{l+1}\bW^{l+1} $, and $\bC_{l}= \bW^{l}\bPhi_{l-1}\bW^{l-1}\cdots \bPhi_{1} $. Having obtained the differential, we have
		\begin{align}
			\nabla_{\bphi_{l}}D_k&=\frac{\partial}{\partial\bphi_{l}^{*}}D_k\\
			&=p_{k}	\kappa_{k}\diag(\bC_{l}^{*}\bw_{k}^{1^{*}}\bw_{k}^{1^{\T}}\bG^{\T}	\bh_{k,\mathrm{LoS}}^{*}\bh_{k,\mathrm{LoS}}^{\T}\bA_{l}^{*}).
		\end{align}
		
		The first term in the denominator is written as
		\begin{align}
			&d\big(\tr\big(\bG \bw_{i}^{1}\bw_{i}^{1^{\H}}\bG^{\H}\bR\big)\big)=\tr\big(\bA_{l} d(\bPhi_{l})\bC_{l} \bw_{i}^{1}\bw_{i}^{1^{\H}}\bG^{\H}\bR\big)\nn\\
			&+\tr\big(\bA_{l}^{\H}\bR\bG \bw_{i}^{1}\bw_{i}^{1^{\H}}\bC_{l}^{\H}d(\bPhi_{l}^{\H})\big).
		\end{align}
		
		Thus, we have
		\begin{align}
			\!	\nabla_{\bphi_{l}}\!\tr\big(\bG \bw_{i}^{1}\bw_{i}^{1^{\H}}\bG^{\H}\bR\big)\!=\!\diag(\bC_{l}^{*}\bw_{i}^{1^{*}}\bw_{i}^{1^{\T}}\bG^{\T}	\bR\bA_{l}^{*}).
		\end{align}
		The second term in the denominator is similar to the numerator. Hence, the derivation is similar.
	\end{appendices}
	\bibliographystyle{IEEEtran}
	
	\bibliography{IEEEabrv,bibl}
\end{document}